\newcommand{\be}{\begin{equation}}
\newcommand{\ee}{\end{equation}}
\newcommand{\bb}{\mathbf{b}}
\newcommand{\hbb}{\mathbf{\hat{b}}}
\newcommand{\vvv}{\mathbf{v}}
\newcommand{\dd}{\mathbf{d}}
\newcommand{\uu}{\mathbf{u}}
\newcommand{\zz}{\mathbf{z}}
\newcommand{\aaa}{\mathbf{a}}
\newcommand{\rr}{\mathbf{r}}
\renewcommand\footnotetextcopyrightpermission[1]{} % 
\begin{document}

\title{Pre- and Post-Auction Discounts in First-Price Auctions}
%\author{Miguel Alcobendas and Eric Bax}
\author{\huge
\begin{tabular}{cc}
\\
\bfseries{Miguel Alcobendas} & \bfseries{Eric Bax} \\
\Large Yahoo Research & \Large Yahoo Research \\
\large \texttt{lisbona@yahooinc.com} & \large \texttt{ebax@yahooinc.com}
\end{tabular}}

\begin{abstract}
One method to offer some bidders a discount in a first-price auction is to augment their bids when selecting a winner but only charge them their original bids should they win. Another method is to use their original bids to select a winner, then charge them a discounted price that is lower than their bid should they win. We show that the two methods have equivalent auction outcomes, for equal additive discounts and for multiplicative ones with appropriate adjustments to discount amounts. As a result, they have corresponding equilibria when equilibria exist. We also show that with the same level of multiplicative adjustments, bidders with discounts should prefer an augmented bid to a discounted price. Then we estimate optimal bid functions for valuation distributions based on data from online advertising auctions, and show how different discount levels affect auction outcomes for those bid functions.
\newline
\newline
\newline
\newline
\newline
\end{abstract}

%\keywords{Auction, Mechanism design, Online Advertising} 

\settopmatter{printfolios=true}
\maketitle
\pagestyle{plain}

\section{Introduction} \label{sec_introduction}
First-price auctions are now the norm in online advertising exchanges \cite{bigler2019,cox2019,akbarpour20,despotakis21}: bidders enter bids, the bidder with the highest bid wins the right to display an advertisement, and they pay their bid. An exchange or a seller may offer some bidders discounts for a variety of reasons, including funding new bidders' learning on the exchange, increasing demand and competition in auctions, or in fulfillment of contract conditions. In online advertising, as in other types of media like TV or radio, it is common for publishers to offer auction discounts to ad agencies or advertisers with large budgets, hoping that the discount attracts more ad spending from the buyer. Outside of online advertising, discounting in auctions has been a way to increase participation from underrepresented or disadvantaged groups of bidders \cite{marion07,kras11}. 

A discount may be applied either pre-auction or post-auction. With a pre-auction discount, which we refer to as an augmented bid, the bid is increased for the purpose of determining the auction winner, but not for determining the winner's price. With a post-auction discount, the winner is selected based on the original bids, but if a bidder with a discount wins, then their winning price is a discounted version of their bid. We will refer to a pre-auction discount as bid augmentation and a post-auction discount as price reduction. 

An additive discount is a fixed amount added to a bid when selecting a winner or subtracted from a bid when charging the bidder should they win. A multiplicative discount is a fraction of the bid that is either added to the bid or subtracted from the price -- for example, a 5\% bid augmentation or discount. In this paper, we consider both additive and multiplicative discounts. Additive discounts are simpler to analyze; multiplicative discounts occur more often in practice. 

We analyze equilibria for first-price auctions with discounts, comparing equilibria under pre- and post-auction discounts. There is a rich literature on equilibria in first-price auctions. Under continuity and independence constraints on bidder valuations, it is possible to show that first-price auctions have unique equilibria and to characterize the equilibria \cite{milgrom82,lebrun99,mcfadden03,levin04,reny04,milgrom05} in terms of conditional second values (CSV), with each bidder bidding the expectation of the runner-up valuation conditioned on their own valuation being the highest. Some challenges for more general methods include dependence among valuations (common values), bidders having different valuation distributions (asymmetry), and the fact that there are simple examples of non-continuous valuation distributions for which equilibria do not exist \cite{maskin00,wang20}. More recently, there have been advances in computing estimated equilibria via an alternative tie-breaking method and discrete approximations \cite{maskin00,maskin03}, indirect methods based on payoff ratios \cite{kirkegaard09}, dynamical systems methods \cite{fibich12}, polynomial basis functions \cite{hubbard12,hubbard13}, and fictitious play \cite{heymann21}.  There is also a method to characterize Bayesian Nash equilibrium bidding strategies for bidders with some discrete valuation distributions \cite{wang20}. 

In Section \ref{additive_sec}, we show that the same additive discounts applied pre- and post-auction lead to the same incentives and outcomes for bidders and the seller, implying the same equilibria. Section \ref{mult_sec} presents an equivalence between (non-equal) multiplicative discount levels applied pre- and post-auction, shows how equal pre- and post-auction discounts impact bidders and the seller, and considers how discounts can produce symmetry in a specific case. Section \ref{emp_sec} presents results based on data from online advertising auctions, explaining how to estimate bid functions under different levels of discounting and examining auction outcomes based on those bid functions. Section \ref{conc_sec} concludes with some potential directions for future work.

\section{Additive Discounts and Matching Equilibria} \label{additive_sec}
In this section, we show that for additive discounts, bid augmentations and price reductions at the same levels lead to equivalent equilibria, in the sense that they have the same outcomes for all participants. Let vectors $\vvv$ be the bidders' valuations, $\bb$ be their bids, and $\aaa$ their bid augmentations (the amounts added to the bids solely for the purpose of selecting the winner). Let $v_i$, $b_i$, and $a_i$ be bidder $i$'s valuation, bid, and bid augmentation. (If bidder $i$ has no discount, then $a_i = 0$.) With bid augmentations, bidder $i$ has utility function
\be
u^+_i(\vvv, \bb, \aaa) = p_i(\bb + \aaa) (v_i - b_i)
\ee
where $p_i()$ is the probability that bidder $i$ wins, which is one if bidder $i$ has the greatest augmented bid:
\be
\forall j \not= i: \hbox{ } b_i + a_i > b_j + a_j,
\ee
zero if some other bidder has a greater augmented bid, and $\frac{1}{k}$ if $k$ bidders, including bidder $i$, tie for maximum augmented bid. 

With price reductions, let $\vvv$ and $\bb$ be valuations and bids as before, and let $\rr$ be the vector of price reductions, with $r_i$ the price reduction for bidder $i$. Then bidder $i$ has utility function:
\be
z^+_i(\vvv, \bb, \rr) = p_i(\bb) (v_i - b_i + r_i).
\ee

\begin{theorem}
For additive discounts, first-price auctions with bid augmentation and those with price reductions have corresponding Nash equilibria (if either has an equilibrium), with each bidder having that same utility under both forms of discounting, if the bid augmentations are equal to the price reductions. 
\end{theorem}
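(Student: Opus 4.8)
The plan is to establish a bijection between bid profiles under the two discount schemes that preserves every bidder's incentives, so that a profile is a Nash equilibrium under one scheme if and only if its image is under the other. Let me think about what the natural correspondence should be.

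Under bid augmentation, bidder $i$ with valuation $v_i$ chooses a bid $b_i$, wins when $b_i + a_i$ is maximal, and gets utility $p_i(\mathbf{b}+\mathbf{a})(v_i - b_i)$. Under price reduction, bidder $i$ chooses a bid $\tilde b_i$, wins when $\tilde b_i$ is maximal, and gets $p_i(\tilde{\mathbf b})(v_i - \tilde b_i + r_i)$ with $r_i = a_i$. The obvious map is $\tilde b_i = b_i + a_i$: a bid in the augmentation auction maps to the augmented bid itself in the price-reduction auction. Let me verify this lines up.

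So the key claim I want to check: under $\tilde b_i = b_i + a_i$ with $r_i = a_i$, the winner-selection rule and the utilities coincide. For winner selection, $b_i + a_i$ is maximal iff $\tilde b_i$ is maximal, so $p_i(\mathbf b + \mathbf a) = p_i(\tilde{\mathbf b})$, and ties are preserved identically (including the $\frac 1k$ tie-breaking). For utility, in the price-reduction auction the winner's payoff term is $v_i - \tilde b_i + r_i = v_i - (b_i + a_i) + a_i = v_i - b_i$, which is exactly the augmentation payoff term. Hence $z_i^+(\mathbf v, \tilde{\mathbf b}, \mathbf r) = u_i^+(\mathbf v, \mathbf b, \mathbf a)$ pointwise, for every profile and every bidder.

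Let me think about whether this is enough to conclude the equilibrium correspondence. The map $b_i \mapsto b_i + a_i$ is a bijection from each bidder's strategy space to itself (just a shift), and in a Bayesian setting a strategy is a function $\beta_i(v_i)$, so the corresponding strategy is $\tilde\beta_i(v_i) = \beta_i(v_i) + a_i$. The crucial point is that this shift is applied independently to each coordinate, so unilateral deviations correspond: if bidder $i$ can profitably deviate from $b_i$ to $b_i'$ in the augmentation auction while others hold fixed, then deviating from $\tilde b_i$ to $b_i' + a_i$ in the price-reduction auction (others fixed at their shifted bids) yields the identical utility change, because the utility identity holds for every profile. Therefore no profitable unilateral deviation exists on one side iff none exists on the other, which is exactly the Nash equilibrium condition. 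I would state this for the interim/Bayesian expected utilities as well, noting that the pointwise identity integrates through expectations over opponents' valuations since the distributions and the shift are unaffected by which scheme is used.

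The main obstacle I anticipate is being careful about the strategy-space formalism rather than the algebra, which is trivial. Specifically I would want to confirm that the bijection respects whatever constraints the bid spaces carry (e.g.\ nonnegativity of bids): a shift by $a_i \ge 0$ maps $b_i \ge 0$ to $\tilde b_i \ge a_i$, so the feasible bid sets are not literally identical, and I would need to argue either that the relevant optimization never wants bids in the excluded region $[0, a_i)$ (since bidding below one's augmentation is weakly dominated or irrelevant to outcomes), or restate the claim in terms of the augmented-bid variable directly so that feasibility is automatic. Handling this boundary issue cleanly — and confirming the tie-breaking convention transfers exactly — is where the real care lies; everything else follows from the pointwise utility identity.
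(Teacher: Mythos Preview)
Your proposal is correct and follows essentially the same route as the paper: establish the pointwise utility identity under the shift $\tilde b_i = b_i + a_i$ (the paper writes the inverse direction, $\bb \mapsto \bb - \rr$), then argue that unilateral deviations transfer so equilibria correspond. The nonnegativity-of-bids boundary issue you flag is something the paper simply does not address.
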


\begin{proof}
Let $\uu^+$ and $\zz^+$ be the vectors of utility functions for bid augmentation and price reduction, respectively. Note that
\be
\zz^+(\vvv, \bb, \rr) = \uu^+(\vvv, \bb - \rr, \rr)
\ee
since, for each bidder $i$
\be
u^+_i(\vvv, \bb - \rr, \rr) = p_i(\bb - \rr + \rr) (v_i - (b_i - r_i)) = p_i(\bb) (v_i - b_i + r_i) = z^+_i(\vvv, \bb, \rr).
\ee

Prove by contradiction. Let $\bb^*$ be any Nash equilibrium bids for price reductions. Now suppose $\bb^* - \rr$ are not Nash equilibrium bids for bid augmentation.  Then some bidder $i$ can change their bid and increase their own utility.  Let $\bb^+ - \rr$ be the boosting bids after bidder $i$ changes their bid to one that increases their utility. Then
\be
z^+_i(\vvv, \bb^*, \rr) = u^+_i(\vvv, \bb^* - \rr, \rr) < u^+_i(\vvv, \bb^+ - \rr, \rr) = z^+_i(\vvv, \bb^+, \dd),
\ee
but $z^+_i(\vvv, \bb^*, \rr) < z^+_i(\vvv, \bb^+, \rr)$ implies that $\bb^*$ is not a Nash equilibrium. So $\bb^* - \rr$ are Nash equilibrium bids for bid augmentation. Since
\be
\zz^+(\vvv, \bb^*, \rr) = \uu^+(\bb^* - \rr, \rr),
\ee
each bidder's utility is the same for bid augmentation as for price reduction, for these corresponding equilibria.

Similar reasoning shows that each Nash equilibrium for bid augmentation corresponds to a Nash equilibrium for price reduction, with augmentations equal to reductions, and with the same utilities for each bidder.
\end{proof}

\section{Multiplicative Discounts and Corresponding Equilibria} \label{mult_sec}
In this section, we explore how multiplicative discounts affect auctions. We show that there are price reductions that produce auctions with equilibria that have the same utilities for all participants as those for bid-augmentation auctions. Next, we compare price-reduction and bid-augmentation auctions that use the same discount rates. Then, we show how discounts can be set to remove asymmetry among bidders with some uniform valuation distributions. 

Under multiplicative discounts, let $a_i$ be the bid augmentation rate for bidder $i$, so that their bid is treated as $\hat{b}_i = b_i (1 + a_i)$ when determining the winner, but they pay only their bid $b_i$ should they win. Define vector $\hbb$ to be the vector of augmented bids $\hat{b}_i$. Then bidder $i$ has utility function
\be
u_i(\vvv, \bb, \aaa) = p_i(\hbb) (v_i - b_i).
\ee
Note that
\be
b_i = \frac{\hat{b}_i}{1 + a_i} = \frac{\hat{b}_i + \hat{b}_i a_i - \hat{b}_i a_i}{1 + a_i} = \hat{b}_i - \hat{b}_i \left(\frac{a_i}{1 + a_i}\right).
\ee
So bidder $i$'s utility function 
\be
u_i(\vvv, \bb, \aaa) = p_i(\hbb) (v_i - b_i) = p_i(\hbb) \left[\left(v_i + \hat{b}_i \left(\frac{a_i}{1 + a_i}\right)\right) - \hat{b}_i\right].
\ee
This is equal to a utility function without an augmented bid ($a_i = 0$), except with bidder $i$'s valuation not $v_i$ but rather a ``virtual validation'' of $\hat{v}_i = v_i + \hat{b}_i \left(\frac{a_i}{1 + a_i}\right)$.

For price reductions, let $r_i$ be the reduction rate for bidder $i$, with their original bid $b_i$ determining whether they win, but their price being $b_i (1 - r_i)$ should they win. Then bidder $i$ has utility function 
\be
z_i(\vvv, \bb, \rr) = p_i(\bb) (v_i - b_i (1 - r_i)).
\ee
Note that this is 
\be
 = p_i(\bb) \left[(v_i + b_i r_i) - b_i\right].
\ee
This is equal to a utility function without a price reduction ($r_i = 0$), but with bidder $i$'s valuation a ``virtual valuation" $\tilde{v}_i = v_i + b_i r_i$ rather than $v_i$.

\subsection{Corresponding Equilibria}
Setting price reduction levels $r_i$ to make virtual valuations the same for a price-reduction auction as those for a bid-augmentation auction produces auctions with corresponding equilibria:

\begin{theorem} \label{mult_equiv}
Setting $r_i = \frac{a_i}{1 + a_i}$ for each bidder $i$ creates corresponding equilibria (if any) for a first-price auction with bid adjustments as for one with price reductions, assuming each bidder $i$ has the same distribution of valuations for both auctions. The corresponding equilibria have the same expected utilities for all participants. 
\end{theorem}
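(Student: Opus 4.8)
The plan is to mirror the proof of the additive case by exhibiting an outcome-preserving bijection between the strategies of the two auctions, except that here the bid transformation is multiplicative rather than a translation. In the bid-augmentation auction the quantity that actually determines the winner is the augmented bid $\hat{b}_i = b_i(1+a_i)$, while the winner pays $b_i$; in the price-reduction auction the winner-determining quantity is $b_i$ itself, while the winner pays $b_i(1-r_i)$. So the natural correspondence identifies a bidder's augmented bid in the first auction with their winner-determining bid in the second: bidder $i$ playing $b_i$ in the augmentation auction corresponds to bidder $i$ playing the bid $\hat{b}_i = b_i(1+a_i)$ in the price-reduction auction.

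First I would check that this correspondence preserves outcomes realization by realization. The ranking used to pick the winner is identical, since both auctions order bidders by the same effective bids $\hat{b}_i = b_i(1+a_i)$, and ties split identically. It then remains to match the winner's payment: in the price-reduction auction the winner pays $\hat{b}_i(1-r_i) = b_i(1+a_i)(1-r_i)$, and the choice $r_i = \frac{a_i}{1+a_i}$ makes $(1+a_i)(1-r_i) = 1$, so this payment equals $b_i$, exactly the augmentation payment. Hence for every valuation profile each bidder wins under the same conditions and pays the same amount, so $u_i$ and $z_i$ agree under the correspondence. Equivalently, this is precisely the $r_i$ that equates the two virtual valuations $\hat{v}_i$ and $\tilde{v}_i$ defined above, so that both utility functions collapse to $p_i(\hbb)(v_i - b_i)$.

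Next I would lift this to Bayesian equilibria, treating a strategy as a bid function of a bidder's valuation and using the hypothesis that the valuation distributions coincide across the two auctions. The strategy correspondence is $\beta_i'(\cdot) = (1+a_i)\beta_i(\cdot)$; because the valuation distributions are the same, corresponding strategy profiles induce identical distributions of effective bids, so bidder $i$'s interim expected utility from any bid matches across the two auctions. Since $b_i \mapsto (1+a_i)b_i$ is a bijection of the bid space, it carries best responses to best responses, and hence carries Nash equilibria to Nash equilibria in both directions, by the same contradiction argument used for the additive case. Because the mapping preserves utilities pointwise it preserves expected utilities for every bidder, and because the winner's identity and payment are preserved it preserves the seller's expected revenue, giving the claimed equality of expected utilities for all participants.

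The main obstacle, relative to the additive case, is the bookkeeping around a multiplicative reparametrization rather than a translation: one must confirm that scaling each bidder's bids by $(1+a_i)$ is a genuine bijection that commutes with winner selection and tie-breaking, and that the virtual-valuation rewriting is legitimate even though $\hat{v}_i$ depends on the bidder's own augmented bid. Once the realization-level identity $z_i(\vvv,\bb',\rr) = u_i(\vvv,\bb,\aaa)$ under the correspondence is established, everything else follows as in Theorem~1.
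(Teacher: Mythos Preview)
Your proposal is correct and follows essentially the same approach as the paper: both exhibit the multiplicative bijection $b_i \mapsto \hat b_i = (1+a_i)b_i$ between bids in the two auctions, verify that with $r_i = \frac{a_i}{1+a_i}$ the winner's payment $\hat b_i(1-r_i)$ collapses to $b_i$ so that $z_i(\vvv,\hbb,\rr)=u_i(\vvv,\bb,\aaa)$, and then run the same contradiction argument on the strategy correspondence $\beta_i \mapsto (1+a_i)\beta_i$ to transfer equilibria. Your write-up is slightly more explicit about tie-breaking, the bijection property, and seller revenue, but there is no substantive difference in method.
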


\begin{proof}
With bid augmentation, the utility function for bidder $i$ is
\be
u_i(\vvv, \bb, \aaa) = p_i(\hbb) (v_i - b_i),
\ee
with $\hbb$ the vector of augmented bids $\hat{b}_i = b_i (1 + a_i)$. For a price-reduction auction, suppose each bidder bids $\hat{b}_i$. Then bidder $i$ has utility
\be
z_i(\vvv, \hbb, \rr) = p_i(\hbb) (v_i - \hat{b}_i (1 - r_i)) = p_i(\hbb) (v_i - b_i (1 + a_i) (1 - r_i)). \label{eq_ut}
\ee
But this is equal to $u_i(\vvv, \bb, \aaa)$ if 
\be
b_i = b_i (1 + a_i) (1 - r_i).
\ee
Setting $1 = (1 + a_i) (1 - r_i)$ and solving for $r_i$:
\be
r_i = \frac{a_i}{1 + a_i}.
\ee
So those price reductions $r_i$, with bidders augmenting their own bids to $\hat{b}_i$, produce the same utility function for each bidder under price reductions as they would have with bids $b_i$ under bid augmentations $a_i$, with the same bids $\hbb$ determining allocations. So the auctions are equivalent, in the sense that bidders have an equivalent set of choices with the same impact on each other's outcomes and their own, and hence they have corresponding equilibria with the same expected utility for each bidder. 

We can prove that by contradiction. Suppose $\bb^*()$ is a vector of equilibrium bid functions: $[b_1^*(), \ldots, b_n^*()]$, for bid augmentation with augmentation $a_i$ for each bidder $i$. Let $r_i = \frac{a_i}{1 + a_i}$ for each bidder $i$. Then, by Equation \ref{eq_ut}, all bidders receive the same expected utilities for price reduction as for bid augmentation if they use the vector of bid functions $\hbb^*() = [(1 + a_1) b_1^*(), \ldots, (1 + a_n) b_n^*()]$. Now suppose these are not equilibrium bid functions. Then some bidder can improve their expected utility for the price-reduction auction by using a different bid function. Without loss of generality, let it be bidder 1, and let the bid function that improves their expected utility be $\tilde{b}_1() \not= (1 + a_1) b_1^*()$. Then, they can use bid function $\tilde{b}_1() / (1 + a_1)$ to improve their expected utility for the bid-augmentation auction. But that contradicts $\bb^*()$ being a vector of equilibrium bid functions.
\end{proof}

\subsection{Bid Augmentation vs. Price Reduction with Equal Rates} \label{eq_sec}
Based on Theorem \ref{mult_equiv}, changing from multiplicative bid augmentations $a_i$ to price reductions at the same level, $r_i = a_i$, has the same effect on auction equilibria and outcomes as lowering the bid augmentations to
\be
\frac{a_i}{1 + a_i} = \frac{a_i + a_i^2 -a_i^2}{1 + a_i} = \frac{a_i(1 + a_i) - a_i^2}{1 + a_i} = a_i - \frac{a_i^2}{1 + a_i},
\ee
which is lowering them by $\frac{a_i^2}{1 + a_i}$. Since it is equivalent to lowering the bid augmentation rates, changing from bid augmentations to equal price reduction rates is a disadvantage for bidders with discounts, and a greater disadvantage for bidders with greater discounts. For small bid augmentation rates, the disadvantage is mild. For example, with a bid augmentation rate of $a_i = 10\%$, changing to a price reduction of $r_i = 10\%$ is equivalent to lowering the bid augmentation rate by $\frac{1}{110}$, or about 1\%.  For a 5\% discount, the discount rate loses about a quarter of a percent, while a 25\% discount becomes equivalent to a 20\% discount instead.

These changes are small relative to the discount amounts, but they may still be large relative to the bidder's expected profit, since the difference between a bidders price and their valuation may be significantly smaller than their price. Hence, changing from one type of discount to the other may affect bidder's expected surplus (profit) significantly, even to the point of determining whether participation produces a positive profit.

\subsection{Optimal Bidding for Uniform Value Distributions with Compensating Discounts}
Assume $n$ bidders each have independent value distributions $U[0,1]$. Without discounts, Bayes-Nash equilibrium bids for this case are \cite{vickrey61,myerson81,riley81,levin04,milgrom05}
\be
b_i = \frac{n-1}{n} v_i,
\ee
where $v_i$ is the value for bidder $i$, and $b_i$ is their optimal (equilibrium) bid. Since the conditions for the revenue-equivalence theorem \cite{vickrey61,myerson81,riley81,maskin00asy,milgrom05} hold for this case -- including having independent and identical (also called symmetric) value distributions -- this result can be proven by computing conditional second values (CSV) \cite{milgrom05} (Thm. 4.6): conditioned on $v_i$ being the top value, the other $n - 1$ bidders' values are independently distributed over $U[0, v_i]$, so their top value as a fraction of $v_i$ has the same distribution as the top value among $n - 1$ independent draws from $U[0, 1]$. That has a $\mathrm{Beta}(n - 1, 1)$ distribution \cite{david03}, which has mean $\frac{n - 1}{n}$.

Now consider the impact of price reductions for independent uniform distributions of values. For each bidder $i$, a price reduction of $r_i$ boosts their equilibrium bid by a factor of $\frac{1}{1 - r_i}$, for the case of independent uniform value distributions $U[0, 1 - r_i]$ rather than $U[0,1]$:

\begin{theorem} \label{uniform_thm}
If each bidder $i$ has price reduction $r_i$ and $n$ bidders each have independent value distributions $U[0,1 - r_i]$, then 
\be
b_i = \left(\frac{n-1}{n}\right) \left(\frac{1}{1 - r_i}\right) v_i
\ee
are Bayes-Nash equilibrium bids.
\end{theorem}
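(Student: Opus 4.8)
The plan is to reduce the claim to the standard symmetric first-price equilibrium for $n$ i.i.d.\ $U[0,1]$ bidders quoted just above the theorem, namely $\frac{n-1}{n}$ times the value. The bridge is a single change of variables. For each bidder $i$, set $\tilde v_i = v_i/(1 - r_i)$; since $v_i \sim U[0, 1 - r_i]$ and $1 - r_i > 0$, each $\tilde v_i$ is distributed $U[0,1]$, so the normalized values are i.i.d.\ and symmetric across bidders even though the $r_i$ differ. The key algebraic observation is that a winning bidder's payoff factors cleanly: using $z_i(\vvv,\bb,\rr) = p_i(\bb)(v_i - b_i(1 - r_i))$ from the previous subsection and substituting $v_i = (1 - r_i)\tilde v_i$, the surplus conditional on winning is $v_i - b_i(1 - r_i) = (1 - r_i)(\tilde v_i - b_i)$, while the allocation still depends only on the original bids $\bb$ through $p_i(\bb)$.

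Second, I would argue strategic equivalence. Because $(1 - r_i) > 0$ is a bidder-specific positive constant, scaling bidder $i$'s payoff by $(1 - r_i)$ leaves their preference ordering over strategies unchanged and does not affect any other bidder's payoff. Hence, interpreting $b_i$ as bidder $i$'s bid and $\tilde v_i$ as their value, the whole Bayesian game is strategically equivalent to a standard first-price auction in which $n$ bidders with i.i.d.\ $U[0,1]$ values submit bids $b_i$, the highest bid wins, and the winner pays $b_i$. The cited symmetric result gives equilibrium bids $b_i = \frac{n-1}{n}\tilde v_i$ there, and unwinding the substitution $\tilde v_i = v_i/(1 - r_i)$ yields exactly $b_i = \left(\frac{n-1}{n}\right)\left(\frac{1}{1 - r_i}\right) v_i$, the claimed equilibrium.

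As a self-contained alternative (and a check on the reduction), I would verify the best response directly. Fixing every opponent $j \neq i$ at $b_j = \frac{n-1}{n}\tilde v_j$ and letting bidder $i$ deviate to a bid $b$, the win probability is $\prod_{j \neq i} \Pr[\tilde v_j < \frac{n}{n-1}b] = \left(\frac{n}{n-1}b\right)^{n-1}$ for $b \in [0, \frac{n-1}{n}]$, so expected utility is $(1 - r_i)\left(\frac{n}{n-1}b\right)^{n-1}\left(\tilde v_i - b\right)$; differentiating in $b$ and setting the derivative to zero gives the interior maximizer $b = \frac{n-1}{n}\tilde v_i$, matching the claim.

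The routine parts are the uniform order-statistic computation of the win probability and the single-variable optimization. The main point requiring care is the equivalence argument itself: I must confirm that the per-bidder positive rescaling by $(1 - r_i)$ genuinely preserves the set of Bayes--Nash equilibria (not merely a particular profile), and that the normalization makes all marginal distributions identical $U[0,1]$, so that symmetry of the reduced game---and hence applicability of the cited symmetric-equilibrium result---truly holds. I would also flag the implicit assumption $r_i < 1$ (so that $1 - r_i > 0$) and note that ties occur with probability zero under these continuous distributions, so the tie-breaking rule in $p_i$ is immaterial.
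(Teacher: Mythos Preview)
Your proposal is correct. Your ``self-contained alternative'' is essentially the paper's proof: fix opponents at the conjectured strategy, compute the win probability $\left(\frac{n}{n-1}b\right)^{n-1}$ from the uniform cdf, write the expected utility $\left(\frac{n}{n-1}\right)^{n-1} b^{n-1}\left[v_i - b(1-r_i)\right]$, and take the first-order condition to recover the bid. Your primary argument, however, takes a genuinely different route. By substituting $\tilde v_i = v_i/(1-r_i)$ you reduce the asymmetric price-reduction game to the standard symmetric $U[0,1]$ first-price auction and invoke the cited equilibrium $\frac{n-1}{n}\tilde v_i$ directly, using only that a bidder-specific positive affine rescaling of payoffs preserves best responses. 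This is more conceptual and explains \emph{why} the particular value supports $[0,1-r_i]$ make the problem tractable: the discounts exactly undo the asymmetry. The paper's direct computation is self-contained and does not rely on the symmetric result being already established, but it obscures this structural point; your reduction makes it transparent at the cost of depending on the quoted theorem. One small remark: the paper's calculation implicitly uses the constraint $\frac{n}{n-1}b_i \leq 1$ when equating the cdf with its argument, which you correctly make explicit as $b \in [0,\frac{n-1}{n}]$.
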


\begin{proof}
If bidder $i$ bids $b_i$ and bidder $j$ bids $b_j = \left(\frac{n-1}{n}\right) \left(\frac{1}{1 - r_j}\right) v_j$ then $b_j \leq b_i$ iff
\be
\left(\frac{n-1}{n}\right) \left(\frac{1}{1 - r_j}\right) v_j \leq b_i.
\ee
Equivalently,
\be
 \left(\frac{1}{1 - r_j}\right) v_j \leq \left(\frac{n}{n - 1}\right) b_i.
\ee
Since each $ \left(\frac{1}{1 - r_j}\right) v_j \sim U[0, 1]$ and $\left(\frac{n}{n - 1}\right) b_i \leq 1$, the probability that $b_j \leq b_i$ is 
\be
\left(\frac{n}{n - 1}\right) b_i.
\ee
Since the product is the probability that bidder $i$ has a top bid, and $v_i - b_i (1 - r_i)$ is bidder $i$'s surplus if they win, bidder $i$ has expected utility:
\be
\mathrm{E} z_i = \prod_{j \not= i} \left[\left(\frac{n}{n - 1}\right) b_i\right] \left[v_i - b_i (1 - r_i)\right].
\ee
Simplifying the product,
\be
\mathrm{E} z_i = \left(\frac{n}{n - 1}\right)^{n - 1} b_i^{n - 1} \left[v_i - b_i (1 - r_i)\right],
\ee

To show that $b_i = \left(\frac{n - 1}{n}\right) \left(\frac{1}{1 - r_i}\right) v_i$ is an optimal bid, take the derivative:
\be
\frac{\partial}{\partial b_i} \mathrm{E} z_i = \left(\frac{n}{n - 1}\right)^{n - 1}  \left[(n - 1) b_i^{n - 2} \left[v_i - b_i (1 - r_i)\right] - b_i^{n - 1} (1 - r_i)\right].
\ee
Then set it to zero:
\be
(n - 1) b_i^{n - 2} \left[v_i - b_i (1 - r_i)\right] = b_i^{n - 1} (1 - r_i).
\ee
Equivalently,
\be
(n - 1) \left[v_i - b_i (1 - r_i)\right] = b_i (1 - r_i),
\ee
and
\be
(n - 1) v_i = n b_i (1 - r_i).
\ee
Solve for $b_i$:
\be
b_i = \left(\frac{n-1}{n}\right) \left(\frac{1}{1 - r_i}\right) v_i.
\ee
\end{proof}

Theorem \ref{uniform_thm} is based on uniform value distributions, but over domains $U[0,1 - r_i]$ rather than $U[0, 1]$ -- distributions with values reduced to match the price discounts. In equilibrium, bidders boost their bids (relative to the symmetric $U[0, 1]]$-distribution case) enough to compensate for their price discounts, making the winner's price $\frac{n - 1}{n}$ times their post-discount values. So discounts even the playing field, giving the asymmetric $U[0, 1 - r_i]$ auction dynamics similar to a symmetric $U[0, 1]$ auction.

\section{Bid Functions and Auctions Under Realistic Valuations} \label{emp_sec}
 In online advertising, as in other type of media like TV or radio, ad agencies or advertisers with large budgets may benefit from discounts offered by publishers designed to attract ad spending. This section explores auction outcomes under discounting for bidders with realistic valuation distributions using online advertising data from Yahoo! ad exchange. Amounts (revenue, valuations, bids, costs, and surpluses) are all multiplied by the same secret constant to obscure actual amounts but preserve relative relationships. Original bids are in US dollars per 1000 impressions -- cost per mille (CPM). 

\subsection{High-Level Process}
We consider an asymmetric first price auction model with two types of buyers: a bidder that benefits from discounts, and the rest. We use the following process to recover the valuation distribution of bidders from bid data, estimate the bidding functions, and estimate expected auction outcomes when one of the buyers benefits from a discount:

\begin{enumerate}
\item Collect bid data for an ad position on the Yahoo! ad platform that has a 5\% price reduction discount for some bidders. The data is for about 200\,000 auctions, drawn from a day in February 2023. The average number of participants in the auction data is around five. For our analysis, we assume that only one of the five bidders receives a discount.
\item Assuming that valuation distributions follow a log-normal distribution and valuation distributions are i.i.d. (no affiliation between bidders), apply an approach by Guerre, Perrigne, and Vuong \cite{guerre00,ma19} to the historical bids, then use maximum likelihood to estimate the shape and scale parameters for two valuation distributions: one for discounted bidders and the other for other bidders. The resulting valuation distributions are similar to each other, having shape and scale parameters that each differ by less than 3\%. 
%For the discounted bids, this gives scale 0.819 and shape 6.65. For the undiscounted bids: scale 0.804 and shape 6.64.
\item For a discounted bidder with the discounted bidders' valuation distribution and four other bidders with the other valuation distribution, numerically estimate the bid functions for the discounted bidder and for the other bidders. To do this, we apply Euler's method to a pair of differential equations to compute estimated inverse bid functions, based on ideas from Maskin and Riley \cite{maskin00asy} and differential equations from \cite{gentry18}, modified to reflect a price reduction. Then we invert to produce estimated bid functions. (For details, see Subsection \ref{subsec_est_bf}.)
\item Based on those estimated bid functions, compute auction outcomes for auctions with a discounted bidder and four other bidders, under different price reduction rates. The outcomes include seller revenue and bidder surplus for the discounted and other bidders. We present those results in Subsection \ref{subsec_results}.
\end{enumerate}

\subsection{Estimating Bid Functions} \label{subsec_est_bf}
To estimate bid functions for a single discounted bidder and four undiscounted bidders, we use differential equations similar to those from \cite{gentry18}. We modify them for a bid reduction, and, instead of fitting a polynomial, we compute based on Euler's method. Following Maskin and Riley \cite{maskin00asy}, we first estimate inverse bid functions -- valuations as functions of bids that are optimal for the valuations -- then invert to produce bid functions. Also based on ideas from Maskin and Riley \cite{maskin00asy}, we ``guess and check" to identify a maximum bid that implies a feasible boundary condition for the system of differential equations. We assume the undiscounted bidders have identical valuation distributions and, by symmetry, equal bid functions, so we only explicitly solve for one of them.

Let bidder 1 be a discounted bidder, with a price reduction that is $r$ times their bid. Let bidders 2 to 5 be undiscounted. Let $F_1$ and $f_1$ be the cdf and pdf, respectively, of the valuation distribution for bidder 1. Let $F_2$ and $f_2$ be those functions for the valuation distributions for bidders 2 to 5. Let $v_1$ and $v_2$ be the valuations for bidders 1 and 2 to 5, respectively, for which $b$ is the optimal bid. Use $F_i$ and $f_i$ as shorthand for $F_i(v_i)$ and $f_i(v_i)$. Use an apostrophe to denote a derivative with respect to $b$, and use $n$ for the number of bidders without discounts. (For us, $n = 4$.)

The utility function for bidder 1 is
\be
u_1 = [v_1 - (1 - r) b] F_2^n,
\ee
since all $n$ undiscounted bidders must have valuations below $v_2$ to bid less than $b$. (Assume continuous valuation distributions to ignore ties.) The utility function for bidder 2 is
\be
u_2 = (v_2 - b) F_1 F_2^{n - 1},
\ee
since the discounted bidder and the other $n - 1$ undiscounted bidders must all bid less than bidder 2 for bidder 2 to win. 

For $b$ to be optimal for $v_1$ and $v_2$, the derivative of utility functions with respect to $b$ must be zero. Since
\be
u_1' = [v_1 - (1 - r) b] n F_2^{n - 1} f_2 v_2' - (1 - r) F_2^n
\ee
and
\be
u_2' = (v_2 - b) \left[f_1 v_1' F_2^{n - 1} + (n - 1) F_1 F_2^{n - 2} f_2 v_2'\right] - F_1 F_2^{n - 1},
\ee
we can set each RHS to zero, solve the first for $v_2'$:
\be
v_2' = \frac{(1 - r) F_2}{n f_2 [v_1 - (1 - r) b]}, \label{v2_eqn}
\ee
and solve the second for $v_1'$:
\be
v_1' = \frac{F_1}{f_1}\left(\frac{1}{v_2 - b} - \frac{(n - 1) f_2 v_2'}{F_2}\right). \label{v1_eqn}
\ee

Let $v_1^*$ and $v_2^*$ be the maximum values in the support of the valuation distributions for bidders 1 and 2, respectively. Given a maximum bid $b^*$ (as yet unknown), for which $v_1^*$ and $v_2^*$ are optimal, we know $F_1(v_1^*) = F_2(v_2^*) = 1$ and we know $f_1(v_1^*)$ and $f_2(v_2^*)$. Our method substitutes those values for $F_1$, $F_2$, $f_1$, and $f_2$ and $v_1^*$ and $v_2^*$ for $v_1$ and $v_2$, with $b$ set to $b^*$ in Equations \ref{v2_eqn} and \ref{v1_eqn}. We then set the inverse bid function for $v_1$ at $b^* - \Delta_b$ (for a small $\Delta_b$) to $v_1^* - v_1' \Delta_b$ and the inverse bid function for $v_2$ at $b^* - \Delta_b$ to $v_2^* - v_2' \Delta_b$. In other words, we use Euler's method, computing the derivatives $v_1'$ and $v_2'$ at $b^*$ and using a linear estimation for values of $v_1$ and $v_2$ of the inverse bid functions at $b^* - \Delta_b$. 

Then we repeat Euler's method, using the values at $b = b^* - \Delta_b$ to estimate those at $b = b^* - 2 \Delta_b$, and so on, until we reach the minimum bid of zero. We use $\Delta_b = \frac{b^*}{10000}$, yielding ten thousand and one pairs $(b, v_i)$ in each estimated inverse bid function.  To identify $b^*$, we use a binary search over candidate values. The greatest value that results in a solution with all successive values decreasing as bids decrease in the inverse bid functions is $b^*$. 

After inversion, we find that the bid functions are very close to optimal. To test this, we identified the best responses for bidder 1 and bidder 2 to the other bidders' bids across the 10001 bids in each bid function. On average, the difference between the best-response bid and the bid function bid was less than 0.03\% of the range of bids for bidder 1 and 0.002\% for bidder 2. 

Figure \ref{bidfuns} shows the estimated optimal bid functions that result from Euler's method. Figure \ref{bidfuns_detail} shows more detail for the higher valuations. For no discount ($r = 0.00$), the discounted bidder has almost the same bid function as the other four bidders, because the valuation distribution for the discounted bidder is nearly the same as that for the other bidders. For positive discounts, the bidder receiving a discount (with bid function shown by a dotted line) bids higher than the other bidders at the same valuations, since the discounted bidder receives a price discount that is $r$ times their bid. This gap increases with the discount. Also, as the discount increases, the discounted bidder exerts more competitive pressure on the other bidders, forcing them to increase their bids, which raises the bid functions and the maximum bid. 

\begin{figure}
\includegraphics[width=3.5in]{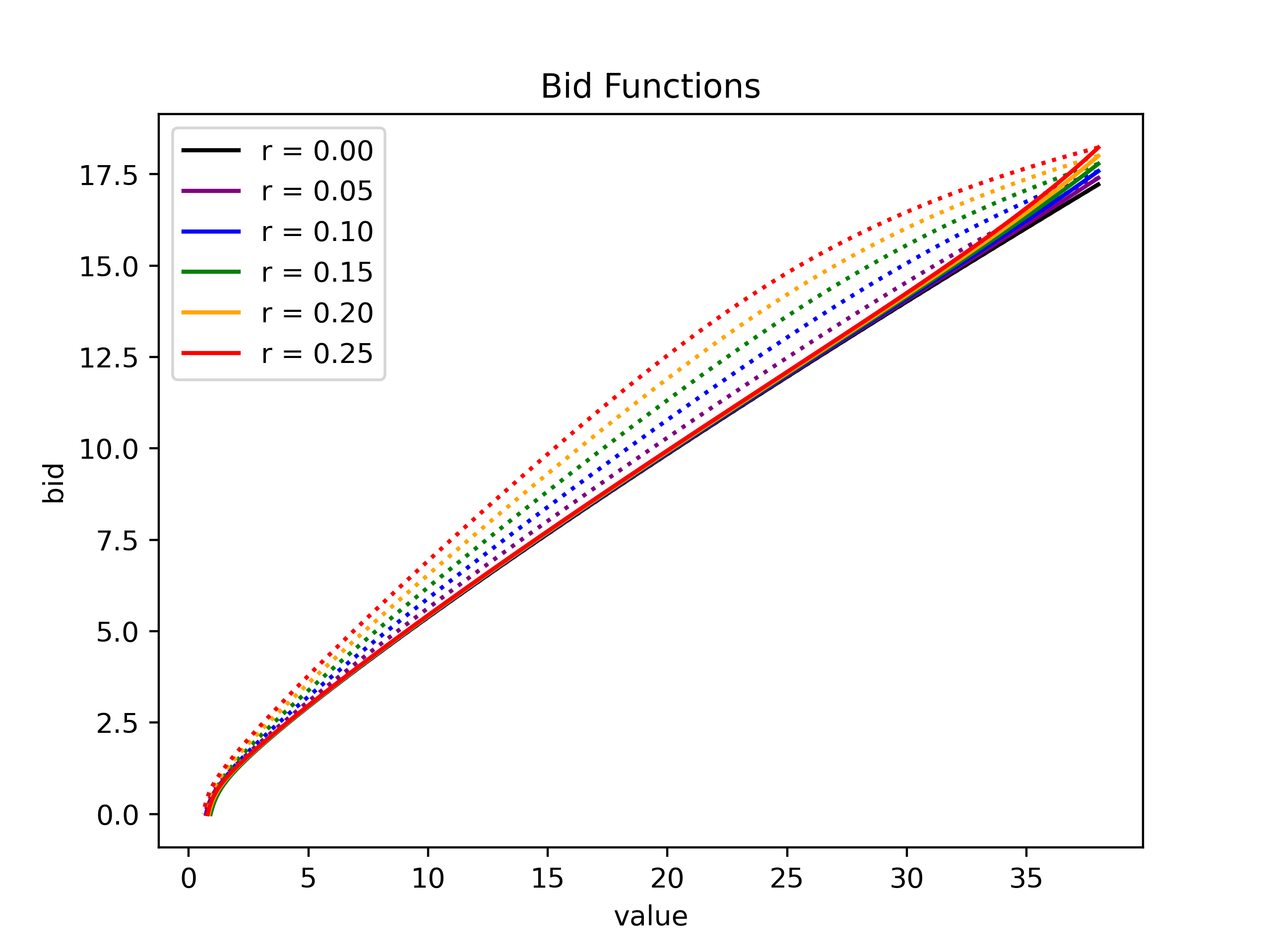} 
\caption{Estimated optimal bids as a function of valuations for a bidder with a price reduction rate $r$ (dashed line) and one of 4 undiscounted bidders (solid line) in auctions with the discounted bidder.} \label{bidfuns}
\end{figure}

\begin{figure}
\includegraphics[width=3.5in]{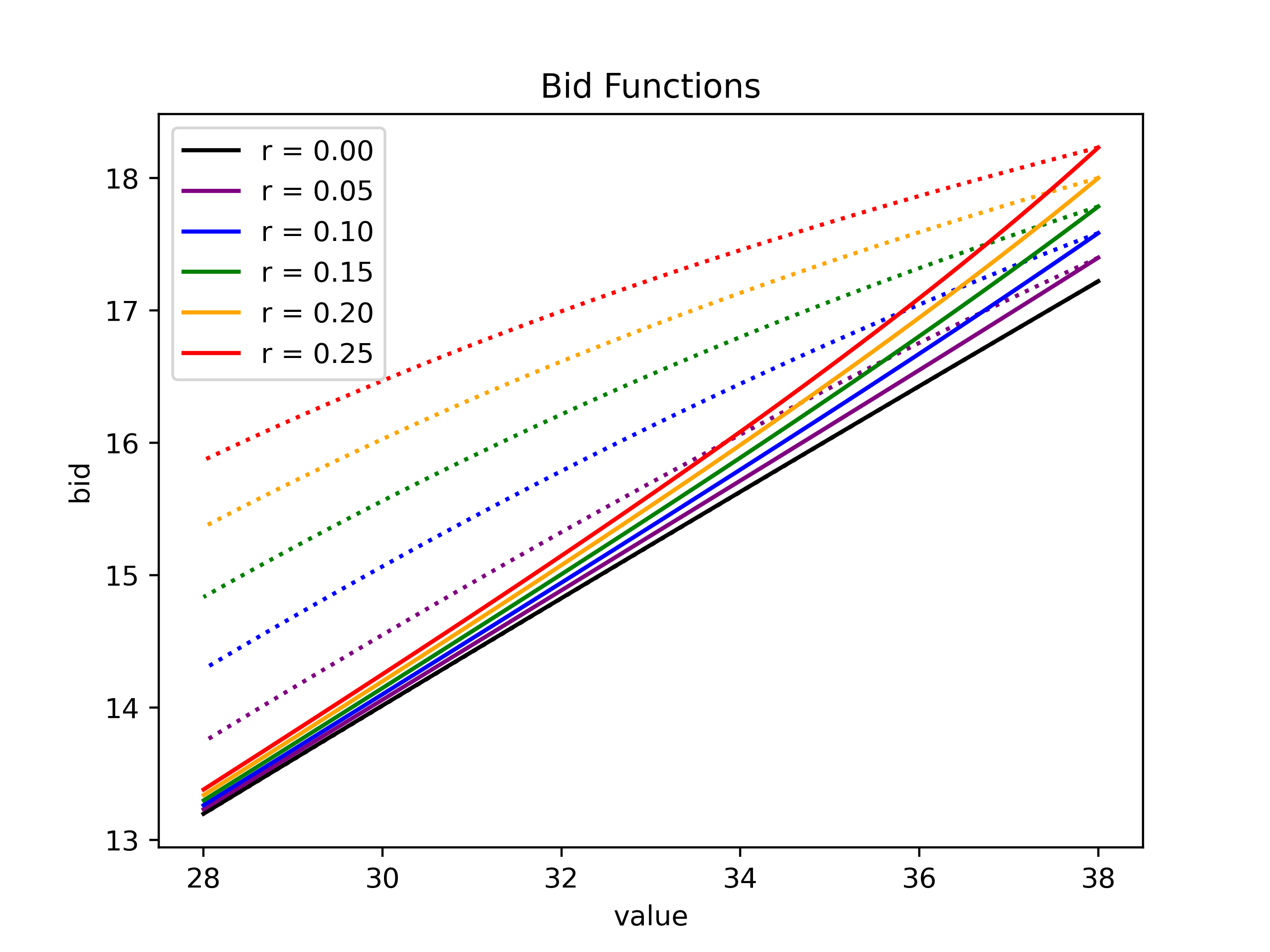} 
\caption{Estimated optimal bids as a function of valuations for a bidder with a price reduction rate $r$ (dashed line) and one of 4 undiscounted bidders (solid line) in auctions with the discounted bidder, showing detail for higher values.} \label{bidfuns_detail}
\end{figure}

\subsection{Auction Outcomes} \label{subsec_results}
\begin{table}
\begin{tabular}{|c|c|c|c|c|c|c|c|c|}
\hline
\multicolumn{9}{|c|}{Auction Outcome Statistics} \\
\hline
\multicolumn{1}{|c|}{} & \multicolumn{1}{c|}{} & \multicolumn{1}{c|}{} & \multicolumn{2}{c|}{Win Rate} & \multicolumn{2}{c|}{E Surplus} & \multicolumn{2}{c|}{E Cost}\\
\cline{4-9}
\multicolumn{1}{|c|}{r} & E Rev. & Eff. & disc. & other & disc. & other & disc. & other \\
\hline
0.00&9.320&1.00&0.200&0.200&1.977&1.975&1.865&1.864\\
0.05&9.319&0.99&0.209&0.198&2.067&1.951&1.904&1.854\\
0.10&9.299&0.98&0.218&0.196&2.165&1.930&1.936&1.841\\
0.15&9.280&0.97&0.227&0.193&2.269&1.904&1.965&1.829\\
0.20&9.255&0.96&0.237&0.191&2.379&1.877&1.989&1.817\\
0.25&9.219&0.95&0.248&0.188&2.498&1.848&2.007&1.803\\
\hline
\end{tabular}
\caption{Expected revenue for the seller ($E\;Rev.$), efficiency ($Eff.$), and win rate, expected buyer surplus ($E\;Surplus$), and expected buyer cost ($E\;Cost$) for discounted ($disc.$) and other bidders, by price reduction rate $r$. To condition surplus and cost on winning, divide by win rate.}
\label{stats_table}
\end{table}

Table \ref{stats_table} shows auction statistics based on the estimated bid functions for the discounted bidder and the four other bidders. The values are from full integration over the bid functions (not random samples from them). Because bids are discretized in the estimated bid functions, ties become possible, and they are considered in the computation of the statistics. (With 10\,001 different bid values, ties have a very small impact on the results.) 

As discount rates $r$ increase, there are mild declines in expected seller revenue ($E\;Rev.$), since a portion of the price is refunded when the discounted bidder wins. That is partially offset by the rise in bids shown in Figures \ref{bidfuns} and \ref{bidfuns_detail}. The efficiency ($Eff.$) -- the probability that the winner has a highest valuation among the bidders -- decreases because the discounted bidder sometimes outbids the others even if some of them have a slightly higher valuation. However, the impact on efficiency is mild, because the bid curves for the discounted bidders are not far to the left of those for the other bidders, as shown in Figure\ref{bidfuns}. 

The win rate for the discounted bidder increases smoothly as the discount increases ($disc.$). The surplus ($E\;Surplus$) for the discounted bidder does the same, due to their price reduction, but partially offset by their increased equilibrium bids. The expected cost ($E\;Cost$) -- price after discount -- for the discounted bidder increases, but that is because of their increased win rate. To get expected cost given an auction win, divide the expected cost in the table by the win rate. For the discounted bidder, the expected cost given a win decreases as the discount rate increases. The opposite is true for the other bidders.

\section{Conclusion} \label{conc_sec} 
Comparing pre- and post-auction discounts, we showed that the same additive discount level produces the same equilibria. For multiplicative discounts, we showed that there are discounts that produce corresponding equilibria for pre- and post-discounting, but they are unequal: price reduction rate $r$ and bid augmentation rate $a$ have corresponding equilibria if $r = \frac{a}{1+a}$. So, in shifting between pre- and post-auction discounts, the rates need to be adjusted to maintain auction outcomes. We also explored how using the same pre- and post-auction discount rates alters auctions and how discounts can be set to compensate for asymmetry among bidders. 

Using empirical data from an online advertising auction, we studied how different discount rates would affect auction outcomes. To do that, we showed how to accurately estimate optimal bid functions under discounting for valuation distributions based on data. For that data, we found that discounting, even up to a 25\% price reduction, has only a mild negative impact (about 1.1\%) on revenue while shifting almost 5\% of the auction wins to the discounted bidder. 

In the future, it would be interesting to apply the methods we use in this paper to evaluate impacts on auction outcomes if discounted bidders have valuation distributions that are very different from the other bidders' distributions -- either much lower or higher on average or with different levels of variation. For those cases, it would be interesting to model bidders' decisions to enter or exit the auction and add that to the computation of estimated auction outcomes. Also, for cases with complicated valuation distributions, it could be useful to employ  higher-order methods rather than Euler's method to estimate bid functions, or to use a curve-fitting approach \cite{gentry18}.

\bibliographystyle{ACM-Reference-Format}
\bibliography{bax} 

%%% -*-BibTeX-*-
%%% Do NOT edit. File created by BibTeX with style
%%% ACM-Reference-Format-Journals [18-Jan-2012].

\begin{thebibliography}{28}

%%% ====================================================================
%%% NOTE TO THE USER: you can override these defaults by providing
%%% customized versions of any of these macros before the \bibliography
%%% command.  Each of them MUST provide its own final punctuation,
%%% except for \shownote{}, \showDOI{}, and \showURL{}.  The latter two
%%% do not use final punctuation, in order to avoid confusing it with
%%% the Web address.
%%%
%%% To suppress output of a particular field, define its macro to expand
%%% to an empty string, or better, \unskip, like this:
%%%
%%% \newcommand{\showDOI}[1]{\unskip}   % LaTeX syntax
%%%
%%% \def \showDOI #1{\unskip}           % plain TeX syntax
%%%
%%% ====================================================================

\ifx \showCODEN    \undefined \def \showCODEN     #1{\unskip}     \fi
\ifx \showDOI      \undefined \def \showDOI       #1{#1}\fi
\ifx \showISBNx    \undefined \def \showISBNx     #1{\unskip}     \fi
\ifx \showISBNxiii \undefined \def \showISBNxiii  #1{\unskip}     \fi
\ifx \showISSN     \undefined \def \showISSN      #1{\unskip}     \fi
\ifx \showLCCN     \undefined \def \showLCCN      #1{\unskip}     \fi
\ifx \shownote     \undefined \def \shownote      #1{#1}          \fi
\ifx \showarticletitle \undefined \def \showarticletitle #1{#1}   \fi
\ifx \showURL      \undefined \def \showURL       {\relax}        \fi
% The following commands are used for tagged output and should be
% invisible to TeX
\providecommand\bibfield[2]{#2}
\providecommand\bibinfo[2]{#2}
\providecommand\natexlab[1]{#1}
\providecommand\showeprint[2][]{arXiv:#2}

\bibitem[Akbarpour and Li(2020)]%
        {akbarpour20}
\bibfield{author}{\bibinfo{person}{Mohammad Akbarpour} {and}
  \bibinfo{person}{Shengwu Li}.} \bibinfo{year}{2020}\natexlab{}.
\newblock \showarticletitle{Credible auctions: A trilemma}.
\newblock \bibinfo{journal}{\emph{Econometrica}} \bibinfo{volume}{88},
  \bibinfo{number}{2} (\bibinfo{year}{2020}), \bibinfo{pages}{425--467}.
\newblock


\bibitem[Bigler(2019)]%
        {bigler2019}
\bibfield{author}{\bibinfo{person}{Jason Bigler}.}
  \bibinfo{year}{2019}\natexlab{}.
\newblock \showarticletitle{Rolling out first price auctions to Google Ad
  Manager partners}.
\newblock \bibinfo{journal}{\emph{Consulted at https://www. blog.
  google/products/admanager/rolling-out-first-price-auctions-google-ad-manager-partners}}
  (\bibinfo{year}{2019}).
\newblock


\bibitem[Cox(2019)]%
        {cox2019}
\bibfield{author}{\bibinfo{person}{Sam Cox}.} \bibinfo{year}{2019}\natexlab{}.
\newblock \showarticletitle{Simplifying programmatic: first price auctions for
  Google Ad Manager}.
\newblock \bibinfo{journal}{\emph{Google Ad Manager (March 6), https://www.
  blog.
  google/products/admanager/simplifying-programmaticfirst-price-auctions-google-ad-manager}}
  (\bibinfo{year}{2019}).
\newblock


\bibitem[David and Nagaraja(2003)]%
        {david03}
\bibfield{author}{\bibinfo{person}{H.~A. David} {and} \bibinfo{person}{H.~N.
  Nagaraja}.} \bibinfo{year}{2003}\natexlab{}.
\newblock \bibinfo{booktitle}{\emph{Order Statistics} (\bibinfo{edition}{3}
  ed.)}.
\newblock \bibinfo{publisher}{Wiley}.
\newblock


\bibitem[Despotakis et~al\mbox{.}(2021)]%
        {despotakis21}
\bibfield{author}{\bibinfo{person}{Stylianos Despotakis}, \bibinfo{person}{R.
  Ravi}, {and} \bibinfo{person}{Amin Sayedi}.} \bibinfo{year}{2021}\natexlab{}.
\newblock \showarticletitle{First-Price Auctions in Online Display
  Advertising}.
\newblock \bibinfo{journal}{\emph{Journal of Marketing Research}}
  \bibinfo{volume}{58}, \bibinfo{number}{5} (\bibinfo{year}{2021}),
  \bibinfo{pages}{888--907}.
\newblock
\urldef\tempurl%
\url{https://doi.org/10.1177/00222437211030201}
\showDOI{\tempurl}
\showeprint{https://doi.org/10.1177/00222437211030201}


\bibitem[Fibich and Gavish(2012)]%
        {fibich12}
\bibfield{author}{\bibinfo{person}{Gadi Fibich} {and} \bibinfo{person}{Nir
  Gavish}.} \bibinfo{year}{2012}\natexlab{}.
\newblock \showarticletitle{Asymmetric First-Price Auctions—A
  Dynamical-Systems Approach}.
\newblock \bibinfo{journal}{\emph{Mathematics of Operations Research}}
  \bibinfo{volume}{37}, \bibinfo{number}{2} (\bibinfo{year}{2012}),
  \bibinfo{pages}{219--243}.
\newblock
\showISSN{0364765X, 15265471}
\urldef\tempurl%
\url{http://www.jstor.org/stable/23256621}
\showURL{%
\tempurl}


\bibitem[Gentry et~al\mbox{.}(2018)]%
        {gentry18}
\bibfield{author}{\bibinfo{person}{Matthew~L. Gentry},
  \bibinfo{person}{Timothy~P. Hubbard}, \bibinfo{person}{Denis Nekipelov},
  {and} \bibinfo{person}{Harry~J. Paarsch}.} \bibinfo{year}{2018}\natexlab{}.
\newblock \bibinfo{booktitle}{\emph{Structural Econometrics of Auctions: A
  Review}}.
\newblock
\urldef\tempurl%
\url{http://dx.doi.org/10.1561/0800000031}
\showURL{%
\tempurl}


\bibitem[Guerre et~al\mbox{.}(2000)]%
        {guerre00}
\bibfield{author}{\bibinfo{person}{Emmanuel Guerre}, \bibinfo{person}{Isabelle
  Perrigne}, {and} \bibinfo{person}{Quang Vuong}.}
  \bibinfo{year}{2000}\natexlab{}.
\newblock \showarticletitle{Optimal Nonparametric Estimation of First-Price
  Auctions}.
\newblock \bibinfo{journal}{\emph{Econometrica}} \bibinfo{volume}{68},
  \bibinfo{number}{3} (\bibinfo{year}{2000}), \bibinfo{pages}{525--574}.
\newblock
\showISSN{00129682, 14680262}
\urldef\tempurl%
\url{http://www.jstor.org/stable/2999600}
\showURL{%
\tempurl}


\bibitem[Heymann and Mertikopoulos(2021)]%
        {heymann21}
\bibfield{author}{\bibinfo{person}{Benjamin Heymann} {and} \bibinfo{person}{P.
  Mertikopoulos}.} \bibinfo{year}{2021}\natexlab{}.
\newblock \showarticletitle{A heuristic for estimating Nash equilibria in
  first-price auctions with correlated values}.
\newblock \bibinfo{journal}{\emph{ArXiv}}  \bibinfo{volume}{abs/2108.04506}
  (\bibinfo{year}{2021}).
\newblock


\bibitem[Hubbard et~al\mbox{.}(2013)]%
        {hubbard13}
\bibfield{author}{\bibinfo{person}{Timothy~P. Hubbard}, \bibinfo{person}{René
  Kirkegaard}, {and} \bibinfo{person}{Harry~J. Paarsch}.}
  \bibinfo{year}{2013}\natexlab{}.
\newblock \showarticletitle{Using Economic Theory to Guide Numerical Analysis:
  Solving for Equilibria in Models of Asymmetric First-Price Auctions}.
\newblock \bibinfo{journal}{\emph{Computational Economics}}
  \bibinfo{volume}{42}, \bibinfo{number}{2} (\bibinfo{year}{2013}),
  \bibinfo{pages}{241--266}.
\newblock
\urldef\tempurl%
\url{https://doi.org/10.1007/s10614-012-9333-z}
\showURL{%
\tempurl}


\bibitem[Hubbard and Paarsch(2012)]%
        {hubbard12}
\bibfield{author}{\bibinfo{person}{Timothy~P. Hubbard} {and}
  \bibinfo{person}{Harry~J. Paarsch}.} \bibinfo{year}{2012}\natexlab{}.
\newblock \bibinfo{booktitle}{\emph{{On the Numerical Solution of Equilibria in
  Auction Models with Asymmetries within the Private-Values Paradigm}}}.
\newblock \bibinfo{type}{{T}echnical {R}eport}.
\newblock


\bibitem[Kirkegaard(2009)]%
        {kirkegaard09}
\bibfield{author}{\bibinfo{person}{René Kirkegaard}.}
  \bibinfo{year}{2009}\natexlab{}.
\newblock \showarticletitle{Asymmetric first price auctions}.
\newblock \bibinfo{journal}{\emph{Journal of Economic Theory}}
  \bibinfo{volume}{144}, \bibinfo{number}{4} (\bibinfo{year}{2009}),
  \bibinfo{pages}{1617--1635}.
\newblock
\showISSN{0022-0531}
\urldef\tempurl%
\url{https://doi.org/10.1016/j.jet.2008.12.001}
\showDOI{\tempurl}


\bibitem[Krasnokutskaya and Seim(2011)]%
        {kras11}
\bibfield{author}{\bibinfo{person}{Elena Krasnokutskaya} {and}
  \bibinfo{person}{Katja Seim}.} \bibinfo{year}{2011}\natexlab{}.
\newblock \showarticletitle{Bid Preference Programs and Participation in
  Highway Procurement Auctions}.
\newblock \bibinfo{journal}{\emph{American Economic Review}}
  \bibinfo{volume}{101}, \bibinfo{number}{6} (\bibinfo{date}{October}
  \bibinfo{year}{2011}), \bibinfo{pages}{2653--86}.
\newblock
\urldef\tempurl%
\url{https://doi.org/10.1257/aer.101.6.2653}
\showDOI{\tempurl}


\bibitem[Lebrun(1999)]%
        {lebrun99}
\bibfield{author}{\bibinfo{person}{Bernard Lebrun}.}
  \bibinfo{year}{1999}\natexlab{}.
\newblock \showarticletitle{{First Price Auctions in the Asymmetric N Bidder
  Case}}.
\newblock \bibinfo{journal}{\emph{International Economic Review}}
  \bibinfo{volume}{40}, \bibinfo{number}{1} (\bibinfo{date}{February}
  \bibinfo{year}{1999}), \bibinfo{pages}{125--142}.
\newblock
\urldef\tempurl%
\url{https://ideas.repec.org/a/ier/iecrev/v40y1999i1p125-42.html}
\showURL{%
\tempurl}


\bibitem[Levin(2004)]%
        {levin04}
\bibfield{author}{\bibinfo{person}{Jonathan Levin}.}
  \bibinfo{year}{2004}\natexlab{}.
\newblock \showarticletitle{Auction Theory}.
\newblock \bibinfo{journal}{\emph{Stanford University}} (\bibinfo{year}{2004}).
\newblock
\urldef\tempurl%
\url{https://web.stanford.edu/~jdlevin/Econ\%20286/Auctions.pdf}
\showURL{%
\tempurl}


\bibitem[Ma et~al\mbox{.}(2019)]%
        {ma19}
\bibfield{author}{\bibinfo{person}{Jun Ma}, \bibinfo{person}{Vadim Marmer},
  {and} \bibinfo{person}{Artyom Shneyerov}.} \bibinfo{year}{2019}\natexlab{}.
\newblock \showarticletitle{Inference for first-price auctions with Guerre,
  Perrigne, and Vuong's estimator}.
\newblock \bibinfo{journal}{\emph{Journal of Econometrics}}
  \bibinfo{volume}{211}, \bibinfo{number}{2} (\bibinfo{date}{aug}
  \bibinfo{year}{2019}), \bibinfo{pages}{507--538}.
\newblock
\urldef\tempurl%
\url{https://doi.org/10.1016/j.jeconom.2019.02.006}
\showDOI{\tempurl}


\bibitem[Marion(2007)]%
        {marion07}
\bibfield{author}{\bibinfo{person}{Justin Marion}.}
  \bibinfo{year}{2007}\natexlab{}.
\newblock \showarticletitle{Are Bid Preferences Benign? The Effect of Small
  Business Subsidies in Highway Procurement Auctions}.
\newblock \bibinfo{journal}{\emph{Journal of Public Economics}}
  \bibinfo{volume}{91} (\bibinfo{date}{08} \bibinfo{year}{2007}),
  \bibinfo{pages}{1591--1624}.
\newblock
\urldef\tempurl%
\url{https://doi.org/10.1016/j.jpubeco.2006.12.005}
\showDOI{\tempurl}


\bibitem[Maskin and Riley(2000a)]%
        {maskin00asy}
\bibfield{author}{\bibinfo{person}{Eric Maskin} {and} \bibinfo{person}{John
  Riley}.} \bibinfo{year}{2000}\natexlab{a}.
\newblock \showarticletitle{Asymmetric Auctions}.
\newblock \bibinfo{journal}{\emph{The Review of Economic Studies}}
  \bibinfo{volume}{67} (\bibinfo{year}{2000}), \bibinfo{pages}{413--438}.
\newblock
\showeprint{https://scholar.harvard.edu/files/maskin/files/asymmetric\_auctions.pdf}


\bibitem[Maskin and Riley(2000b)]%
        {maskin00}
\bibfield{author}{\bibinfo{person}{Eric Maskin} {and} \bibinfo{person}{John
  Riley}.} \bibinfo{year}{2000}\natexlab{b}.
\newblock \showarticletitle{Equilibrium in Sealed High Bid Auctions}.
\newblock \bibinfo{journal}{\emph{The Review of Economic Studies}}
  \bibinfo{volume}{67}, \bibinfo{number}{3} (\bibinfo{date}{07}
  \bibinfo{year}{2000}), \bibinfo{pages}{439--454}.
\newblock
\showISSN{0034-6527}
\urldef\tempurl%
\url{https://doi.org/10.1111/1467-937X.00138}
\showDOI{\tempurl}
\showeprint{https://academic.oup.com/restud/article-pdf/67/3/439/4422007/67-3-439.pdf}


\bibitem[Maskin and Riley(2003)]%
        {maskin03}
\bibfield{author}{\bibinfo{person}{Eric Maskin} {and} \bibinfo{person}{John
  Riley}.} \bibinfo{year}{2003}\natexlab{}.
\newblock \showarticletitle{Uniqueness of equilibrium in sealed high-bid
  auctions}.
\newblock \bibinfo{journal}{\emph{Games and Economic Behavior}}
  \bibinfo{volume}{45}, \bibinfo{number}{2} (\bibinfo{year}{2003}),
  \bibinfo{pages}{395--409}.
\newblock
\showISSN{0899-8256}
\urldef\tempurl%
\url{https://doi.org/10.1016/S0899-8256(03)00150-7}
\showDOI{\tempurl}
\newblock
\shownote{Special Issue in Honor of Robert W. Rosenthal}.


\bibitem[McFadden(2003)]%
        {mcfadden03}
\bibfield{author}{\bibinfo{person}{Daniel McFadden}.}
  \bibinfo{year}{2003}\natexlab{}.
\newblock \showarticletitle{The Theory of First-Price, Sealed-Bid Auctions}.
\newblock \bibinfo{journal}{\emph{UC Berkeley}} (\bibinfo{year}{2003}).
\newblock
\urldef\tempurl%
\url{https://eml.berkeley.edu/~mcfadden/eC103_f03/auctionlect.pdf}
\showURL{%
\tempurl}


\bibitem[Milgrom(2004)]%
        {milgrom05}
\bibfield{author}{\bibinfo{person}{P.~R. Milgrom}.}
  \bibinfo{year}{2004}\natexlab{}.
\newblock \bibinfo{booktitle}{\emph{Putting Auction Theory to Work}}.
\newblock \bibinfo{publisher}{Cambridge University Press}.
\newblock


\bibitem[Milgrom and Weber(1982)]%
        {milgrom82}
\bibfield{author}{\bibinfo{person}{P.~R. Milgrom} {and} \bibinfo{person}{R.~J.
  Weber}.} \bibinfo{year}{1982}\natexlab{}.
\newblock \showarticletitle{A Theory of Auctions and Competitive Bidding}.
\newblock \bibinfo{journal}{\emph{Econometrica}} \bibinfo{volume}{50},
  \bibinfo{number}{5} (\bibinfo{year}{1982}), \bibinfo{pages}{1089--1122}.
\newblock


\bibitem[Myerson(1981)]%
        {myerson81}
\bibfield{author}{\bibinfo{person}{R.~B. Myerson}.}
  \bibinfo{year}{1981}\natexlab{}.
\newblock \showarticletitle{Optimal Auction Design}.
\newblock \bibinfo{journal}{\emph{Mathematics of Operations Research}}
  \bibinfo{volume}{6}, \bibinfo{number}{1} (\bibinfo{year}{1981}),
  \bibinfo{pages}{58--73}.
\newblock


\bibitem[Reny and Zamir(2004)]%
        {reny04}
\bibfield{author}{\bibinfo{person}{Philip Reny} {and} \bibinfo{person}{Shmuel
  Zamir}.} \bibinfo{year}{2004}\natexlab{}.
\newblock \showarticletitle{On the Existence of Pure Strategy Monotone
  Equilibria in Asymmetric First-Price Auctions}.
\newblock \bibinfo{journal}{\emph{Econometrica}}  \bibinfo{volume}{72}
  (\bibinfo{date}{02} \bibinfo{year}{2004}), \bibinfo{pages}{1105--1125}.
\newblock
\urldef\tempurl%
\url{https://doi.org/10.1111/j.1468-0262.2004.00527.x}
\showDOI{\tempurl}


\bibitem[Riley and Samuelson(1981)]%
        {riley81}
\bibfield{author}{\bibinfo{person}{J.~G. Riley} {and} \bibinfo{person}{W.~F.
  Samuelson}.} \bibinfo{year}{1981}\natexlab{}.
\newblock \showarticletitle{Optimal Auctions}.
\newblock \bibinfo{journal}{\emph{American Economic Review}}
  \bibinfo{volume}{71}, \bibinfo{number}{3} (\bibinfo{year}{1981}),
  \bibinfo{pages}{381--392}.
\newblock


\bibitem[Vickrey(1961)]%
        {vickrey61}
\bibfield{author}{\bibinfo{person}{W. Vickrey}.}
  \bibinfo{year}{1961}\natexlab{}.
\newblock \showarticletitle{Counterspeculation, Auctions, and Competitive
  Sealed Tenders}.
\newblock \bibinfo{journal}{\emph{Journal of Finance}} \bibinfo{volume}{16},
  \bibinfo{number}{1} (\bibinfo{year}{1961}), \bibinfo{pages}{8--37}.
\newblock


\bibitem[Wang et~al\mbox{.}(2020)]%
        {wang20}
\bibfield{author}{\bibinfo{person}{Zihe Wang}, \bibinfo{person}{Weiran Shen},
  {and} \bibinfo{person}{Song Zuo}.} \bibinfo{year}{2020}\natexlab{}.
\newblock \showarticletitle{Bayesian Nash Equilibrium in First-Price Auction
  with Discrete Value Distributions}. In \bibinfo{booktitle}{\emph{Proceedings
  of the 19th International Conference on Autonomous Agents and MultiAgent
  Systems}} (Auckland, New Zealand) \emph{(\bibinfo{series}{AAMAS '20})}.
  \bibinfo{publisher}{International Foundation for Autonomous Agents and
  Multiagent Systems}, \bibinfo{address}{Richland, SC},
  \bibinfo{pages}{1458–1466}.
\newblock
\showISBNx{9781450375184}


\end{thebibliography}

\end{document}